\def\Arg{\mathop{\operator@font Arg}\nolimits}
\theoremstyle{plain}
\newtheorem{theorem}{Theorem}[section]
\newtheorem{definition}[theorem]{Definition}
\newtheorem{remark}[theorem]{Remark}
\begin{document}

\setcounter{equation}{0}

\title[On Propagators in TDDFT]
{Existence of Propagators for time dependent Coulomb-like potentials}
\author[E. Stachura]
{Eric Stachura}

\address{Department of Mathematics and Statistics, Haverford College, 370 Lancaster Ave., Haverford, PA 19041}
\email{estachura@haverford.edu}

\maketitle
\begin{abstract}
We prove existence of propagators for a time dependent Schr\"odinger equation with a new class of softened Coulomb potentials, which we allow to be time dependent, in the context of time dependent density functional theory. We compute explicitly the Fourier transform of these new potentials, and provide an alternative proof for the Fourier transform of the Coulomb potential using distribution theory. Finally we show the new potentials are dilatation analytic, and so the spectrum of the corresponding Hamiltonian can be fully characterized. 
\end{abstract}
\tableofcontents

\section{Introduction}
 The classical Coulomb potential $V(x) = |x|^{-1}$ for $x \in \mathbb{R}^3$ is one of the most fundamental potentials in electronic structure theory since it describes the interaction between electron and proton for the Hydrogen atom. It can easily be extended to many electron systems by considering the electron repulsion term
 $$\sum_{j<k} \dfrac{1}{|x_j -x_k|}.$$ The difficulty surrounding this potential in practice is its singular behavior at the origin. In the past, the singularity has been smoothed out by multiplying by a $C^\infty$ radial function. One can then recover the classical Coulomb potential as a limiting case of the smoothed out potentials. Motivated by the Runge-Gross Theorem \cite{Runge-Gross} in Time Dependent Density Functional Theory (TDDFT) where singular time dependent potentials naturally arise, we study a new class of smoothed out Coulomb potentials that ``behave well" at the origin. Namely, these potentials are $C^\infty$ radial functions whose derivatives of every order vanish at the origin. Such a property should be satisfied by a potential if one hopes to prove a Runge-Gross type theorem (see Section \ref{sec:background}), at least via the existing proof techniques. 
 
  

We employ time dependent perturbation theory to obtain existence of propagators. This is not a new method; rather, its mathematical foundations can be found in the seminal work \cite{Simon73}, and has numerous applications (such as computing radiative lifetimes of excited states of various atoms). By using this technique we allow the nuclei to move in time, which is where the explicit time dependence for us arises. Additionally, time dependent perturbation theory arises in density functional theory \cite[Chapter 10]{marques2006time}, for instance to derive better approximations of certain exchange correlation functionals; thus there is a natural connection between the two. A nice review of the fundamentals of TDDFT can be found in \cite{maitra2016perspective}. 

Furthermore, the propagator we construct to prove existence of solutions to an appropriate Schr\"odinger equation is done by using the so-called Dyson series. We find this quite practical, as this series can be used to calculate various transition probabilities. Additionally, our propagator is constructed in such a way so that the wavefunction stays in the domain of the quantum harmonic oscillator for all positive time. 

A number of recent results address singular potentials, both time independent and time dependent, but their singular nature often causes tremendous difficulty. For instance, propagators are constructed in \cite{Yajima2016} for a time dependent Schr\"odinger equation via Strichartz estimates, but with particles subject to external fields. These results unfortunately do not include the time dependent potentials studied here. It is our hope that through the potentials introduced in this article, more rigorous analysis (spectral analysis of the time dependent Hamiltonian, regularity and smoothing properties of the propagators, etc.) can be done in the context of TDDFT, with an eye of improving the classical Runge-Gross Theorem. 

The outline of this article is as follows. In Section \ref{sec:background} we give the necessary background and introduce the new class of potentials in (\ref{Ppotential}). Then in Section \ref{sec:existence} we state and prove the main theorem for the Hydrogen atom. This is extended in a natural way to a $N$ particle, $M$ nuclei setting in Section \ref{sec:Nparticleexistence}. In Section \ref{sec:FA} we compute the Fourier transform of the new potential, and show that in an appropriate limit one recovers the Fourier transform of the classical Coulomb potential. Finally, in Section \ref{sec:spectral} we show that the new class of potentials introduced here are dilatation analytic, and so the spectrum of the corresponding Hamiltonian can be completely characterized. 

The author would like to thank John Perdew, Maxim Gilula, and Jonas Lampart for many helpful discussions related to this article.

\section{Background and Setup} \label{sec:background}
For a general non-relativistic $N$ particle system, the measure $dx = dx_1\cdots dx_N$ will always denote Lebesgue measure in $\mathbb{R}^{3N}$ since we take $x_j \in \mathbb{R}^3$ for $1 \leq j \leq N$. We frequently write $x = (x_1, ..., x_n)$ as a general element of $\mathbb{R}^n$. Recall that for $1\leq p < \infty$, 
$$L^p\left( \mathbb{R}^{3N}\right) = \left\{ \psi: \left(\int_{\mathbb{R}^{3N}} |\psi(x_1,..,x_N)|^p dx\right)^{1/p} < + \infty\right\}.$$
The space of square integrable functions on $\mathbb{R}^{3N}$ is $L^2 (\mathbb{R}^{3N})$. 

If $1 \leq p, q \leq \infty$, a function $f$ is said to belong to $L^p (\mathbb{R}^{3N}) + L^q(\mathbb{R}^{3N})$ if there exists $f_1 \in L^p (\mathbb{R}^{3N})$ and $f_2 \in L^q (\mathbb{R}^{3N})$ such that $f = f_1 +f_2$. Recall that $L^\infty(\Omega)$ denotes the set of functions on some space $\Omega$ which are bounded almost everywhere\footnote{For us this will always be with respect to Lebesgue measure.}. We denote by $B_r(x)$ the ball centered at $x$ of radius $r$ in $\mathbb{R}^n$, i.e. $B_r(x) = \{ y \in \mathbb{R}^n: |y-x| < r \}$. 
A function $V$ is said to belong to the class $C^k$ if all $k$ space derivatives of $V$ exist and are continuous. Given a function $V \in L^2 (\mathbb{R}^{3N})$, any derivative of $V$ must be understood in the sense of distributions (see Section \ref{sec:Distributions} for more details). 

We will also need the notion of the domain of an operator. To this end, suppose $H$ is a Hilbert space. An unbounded operator $A$ is an operator defined on a subset $\mathscr{D}(A) \subset H$ called its domain. In general $\mathscr{D}(A) \neq H$. In fact, it is well known that $\mathscr{D}(A) = H$ if and only if $A$ is bounded. In the quantum mechanical setting that follows, the operator $A$ will generally be a Hamiltonian, which most of the time is unbounded, so its domain needs to be specified. The adjoint of an unbounded operator is well-defined, provided its domain is dense in the Hilbert space. In particular, $A: \mathscr{D}(A) \subset H \to H$ is said to be self-adjoint provided $\mathscr{D}(A) = \mathscr{D}(A^*)$, where $A^*$ is the adjoint of $A$. Further, $[A, B]$ denotes the commutator of two operators $A,B$ whenever this makes sense: $[A,B]\psi = A(B\psi)-B(A\psi)$. 


Finally, recall that a bounded, linear operator $U(t)$ on a Hilbert space $H$ is said to be unitary provided $U(t) U(t)^* = U(t)^* U(t)= I$, where $U(t)^*$ denotes the adjoint of $U(t)$, and $I$ denotes the identity on $H$. 

\subsection{Density Functional Theory}

Consider a general $N$ particle (Fermion or Boson) system, with $M$ nuclei, and with Hamiltonian\footnote{We use atomic units, so that $e^2 = \hbar = m =1$ so distances are measured in Bohrs and energies in Hartrees.}
\begin{align} \label{Nparticletdhamiltoniandef}
H_V(t) \coloneqq \sum_{j=1}^N -\Delta_{x_j} + V_0(x_j) + V(x_j, t) + \sum_{j<k} F(x_j-x_k),
\end{align}
where 
$$\Delta_{x_j} \coloneqq \dfrac{\partial^2}{\partial x_{j_1}^2} +  \dfrac{\partial^2}{\partial x_{j_2}^2} + \dfrac{\partial^2}{\partial x_{j_3}^2}$$ is the Laplacian (kinetic energy operator) acting on particle $j$, and $V_0(x_j)$ is a fixed external potential. The function $F$ is the interaction term between electrons $j$ and $k$. Corresponding to $H_V(t)$ is the time dependent Schr\"odinger equation
\begin{align} \label{Nparticletdse}
\begin{cases}
\dfrac{\partial \psi (X,t)}{\partial t} &= -i H_V \psi(X,t)\\
\psi(X,0) &=\psi_0(X).
\end{cases}
\end{align}
Here, $X = (x_1, ..., x_N)$ and each $x_j \in \mathbb{R}^3$. To emphasize our focus on singular potentials, we shall take the fixed external potential to be Coulomb, namely,
\begin{align} \label{V0singular}
V_0(x_j) = -\dfrac{1}{|x_j|}.
\end{align}
Without the time dependent external potential we write
\begin{align} \label{H0def}
H_0 = \sum_{j=1}^N -\Delta_{x_j} +V_0(x_j) = \sum_{j=1}^N -\Delta_{x_j} -\dfrac{1}{|x_j|} + \sum_{j<k} F(x_j-x_k) 
\end{align}
so that the full time dependent Hamiltonian $H_V(t)$ can be written as a time dependent perturbation of $H_0$: 
\begin{align}\label{HVdecomp}
H_V(t) = H_0+ V(t),
\end{align}
where $V(t)$ denotes multiplication by $V(x_j,t)$. We focus here on the Hydrogen atom: when $N=M =1$ and $F \equiv 0$. In this case the free Hamiltonian is
\begin{align} \label{fullHamiltonianH0}
H_0 = - \Delta -\dfrac{1}{|x|}
\end{align}
and the full Hamiltonian is
\begin{align} \label{fullHamiltonian}
H_V = H_0 + V(x,t)
\end{align}
for some external, time dependent, potential. By the Kato-Rellich Theorem\footnote{See e.g. Theorem X.15 in \cite{RS2}.}, we have that $H_0$ is self-adjoint on $\mathscr{D}(-\Delta)$. The density $n(x,t)$ of the $N$ particle solution of (\ref{Nparticletdse}) is defined\footnote{We are ignoring spin coordinates here.} by
$$n(x,t) = N \int_{\mathbb{R}^{3(N-1)}} |\psi(x,x_2, ..., x_N, t)|^2 dx_2 ... dx_N.$$
We recall here the classical Runge-Gross uniqueness theorem, which serves as motivation:

\begin{theorem}[\cite{Runge-Gross}] \label{thm:RG}
Let $V_j(x,t)$ for $j=1,2$ be two time analytic potentials, and let $\psi_j(X,t)$ denote the corresponding solutions of (\ref{Nparticletdse}) with the same initial data $\psi_0$. Suppose the $\psi_j$ are also time analytic. If $n_1(x,t) = n_2(x,t)$ for all $t \in [0, t_{\text{max}})$ and all $x \in \mathbb{R}^3$, then
$$V_1(x,t) = V_2(x,t) + C(t).$$
\end{theorem}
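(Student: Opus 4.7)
The plan is a proof by contradiction, pivoting between the current density and the density via the continuity equation. Assume $w(x,t) := V_1(x,t) - V_2(x,t)$ is not of the form $C(t)$. Time analyticity of $V_1,V_2$ gives $w(x,t) = \sum_{k \geq 0} w_k(x)\,t^k/k!$ with $w_k := \partial_t^k w(\cdot,0)$, so there is a smallest $k_0 \geq 0$ for which $w_{k_0}$ is nonconstant in $x$.

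First I would work at the level of the multi-particle current density
$$j(x,t) := N \operatorname{Im} \int_{\mathbb{R}^{3(N-1)}} \psi^*(x,x_2,\ldots,x_N,t)\,\nabla_x \psi(x,x_2,\ldots,x_N,t)\,dx_2\cdots dx_N.$$
A direct computation using (\ref{Nparticletdse}) yields a Heisenberg-type equation of motion for $j$ whose right-hand side splits as commutator contributions from the kinetic, $V_0$, and interaction $F$ parts of $H_V(t)$, plus a term of the form $-n(x,t)\nabla V(x,t)$ coming from the external potential. At $t=0$, since $\psi_1 = \psi_2 = \psi_0$ and the first group of terms depends only on the common operator $H_0$, the difference collapses to
$$\partial_t(j_1 - j_2)(x,0) = -n_0(x)\,\nabla w_0(x).$$
Iterating by induction on $k$, using that $w_0,\ldots,w_{k_0-1}$ are constant in $x$ (hence force $\psi_1$ and $\psi_2$ to agree to order $k_0-1$ in $t$ at $t=0$, up to an irrelevant global phase), I would obtain
$$\partial_t^{k_0+1}(j_1 - j_2)(x,0) = -n_0(x)\,\nabla w_{k_0}(x),$$
which is not identically zero on $\{n_0>0\}$ by the choice of $k_0$.

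Next I would convert current-density information into density information via the continuity equation $\partial_t n + \nabla \cdot j = 0$, itself a direct consequence of (\ref{Nparticletdse}). One additional time derivative gives
$$\partial_t^{k_0+2}(n_1 - n_2)(x,0) = \nabla \cdot \bigl[n_0(x)\,\nabla w_{k_0}(x)\bigr].$$
Under the hypothesis $n_1 \equiv n_2$, analyticity in $t$ forces the left side to vanish for every $k_0$. Pairing with $w_{k_0}$ and integrating by parts on $\mathbb{R}^3$ then produces
$$0 = -\int_{\mathbb{R}^3} n_0(x)\,|\nabla w_{k_0}(x)|^2\,dx + (\text{boundary terms at infinity}),$$
and once the boundary terms are dispatched, one concludes $\nabla w_{k_0} \equiv 0$ on $\{n_0>0\}$, contradicting the choice of $k_0$.

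The main obstacle is precisely the vanishing of those boundary terms, which is the classical Achilles heel of the Runge-Gross argument: it requires sufficient spatial decay of $n_0$ and of $n_0\,w_{k_0}\,\nabla w_{k_0}$, and must be extracted either from restrictions on the admissible class of $V_j$ or from the physical assumption that $\psi_0$ decays at infinity. A secondary technical point is the rigorous justification of the Heisenberg-type commutator identities in the presence of the singular terms $-1/|x_j|$ and $F(x_j-x_k)$; this can be handled by working on $\mathscr{D}(H_0) = \mathscr{D}(-\Delta)$ so that $\nabla \psi(\cdot,t) \in L^2(\mathbb{R}^{3N})$, with all derivatives interpreted distributionally as indicated in Section \ref{sec:background}.
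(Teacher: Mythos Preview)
The paper does not prove Theorem~\ref{thm:RG}; it is merely stated as background and attributed to \cite{Runge-Gross}, so there is no ``paper's own proof'' to compare against. Your outline is essentially the original Runge--Gross argument from that reference: derive the equation of motion for the current density, iterate in time using analyticity to isolate the first nonconstant Taylor coefficient $w_{k_0}$, then pass to the density via the continuity equation and integrate by parts against $w_{k_0}$. You have also correctly flagged the two genuine technical issues---the boundary terms at infinity and the rigorous commutator identities in the presence of Coulomb singularities---which are precisely the reasons the paper treats this theorem as motivation rather than as something it can extend directly.
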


A current issue with Theorem \ref{thm:RG} is the applicability to Coulombic systems, which are of interest in practice. For this reason, we undertake the task of smoothing out a Coulomb potential by introducing a decaying exponential term of the form $e^{-c/|x|}$ for some constant c. This is done in such a way so as to obtain the classical Coulomb potential in the limit as the parameter $c$ goes to zero. However, in order to begin such an analysis, one needs to prove existence of continuously differentiable solutions to (\ref{Nparticletdse}) with such softened potentials. The first goal of this article is to construct such solutions via a convergent Neumann series known as the Dyson series.

We treat the nucleus as a classical particle, and allow it to move along a trajectory described by $r(t)$. To this end, we smooth out the classical Coulomb potential and introduce the potential
\begin{align} \label{Ppotential}
V_P(x) = \dfrac{e^{-C/|x|}}{|x|}
\end{align}
for $x \in \mathbb{R}^3$ and $C>0$ a small (arbitrary) constant.

There is a subtle difference between the potentials $V_P(x)$ and other previously studied softened potentials, such as the Yukawa potentials $V_Y(x)$ \cite{Yukawa35} defined by
$$V_Y(x) = \alpha^2 \dfrac{e^{-c |x|}}{|x|},$$
which can be seen by analyzing their behavior at the origin as well as at infinity. Since the limit of the potentials $V_p$ and all derivatives of $V_p$ approach $0$ as $x$ approaches the origin, $V_p$ and all derivatives have a removable discontinuity at the origin (see Figure \ref{default}). So $V_p$ can be extended to a smooth function on the whole real line, while the Yukawa potentials cannot since they have an infinite discontinuity at the origin. In particular, the partial derivatives of the Yukawa potentials are not bounded. Also, the potentials $V_P$ have a singularity at infinity, as they are not integrable far from the origin, while the Yukawa potentials are integrable away from the origin in any $L^p$ space.
\begin{figure}
\begin{center}
\includegraphics[scale=0.7]{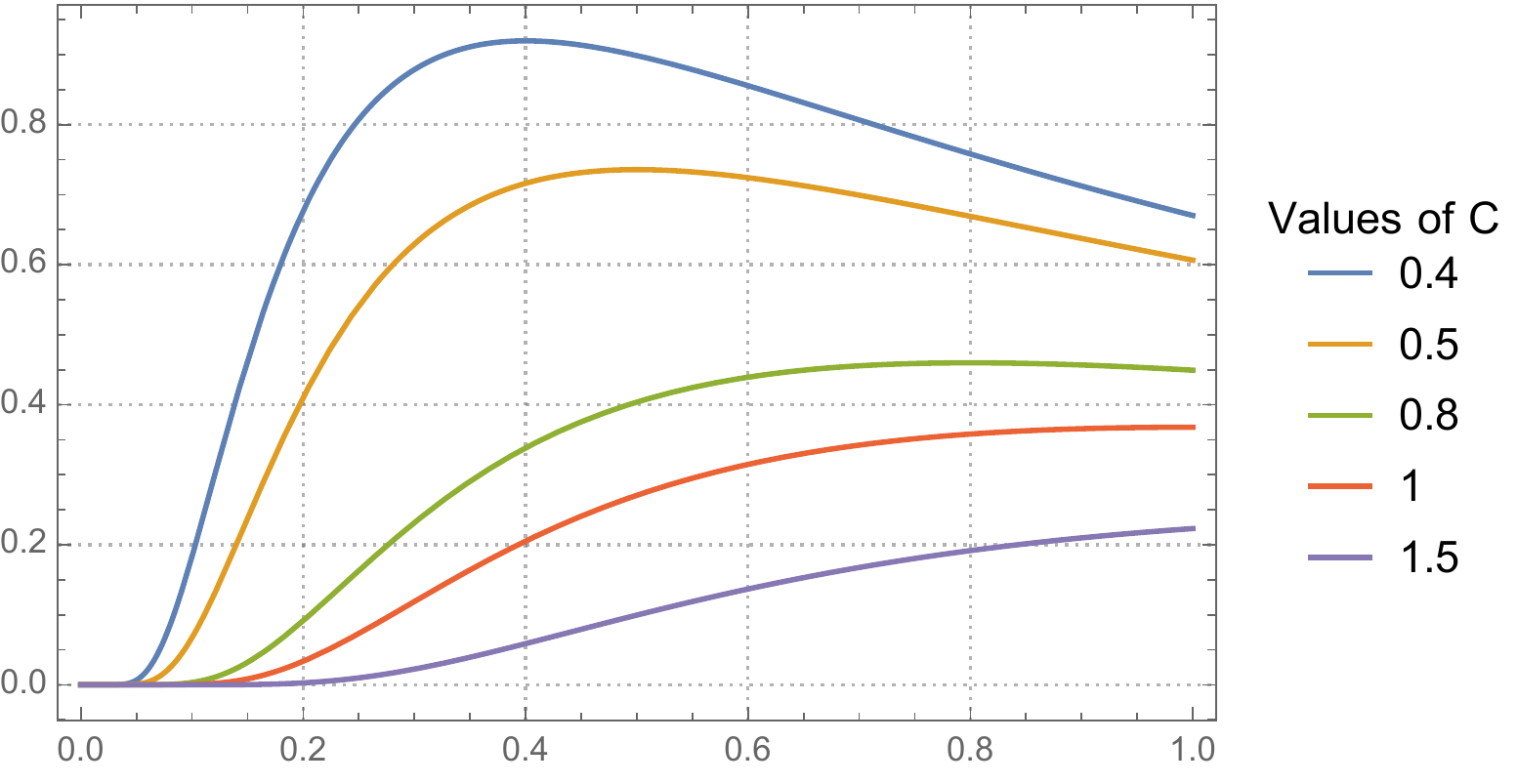}
\caption{Plots of the function $V_P(x)$ for various values of the constant $C$.}
\label{default}
\end{center}
\end{figure}
On the other hand, the Yukawa potentials were shown to be the potential for the strong force between nucleons, and have a very clear physical interpretation. A quick eigenvalue analysis reveals that $V_Y(x)$ are eigenfunctions of a general radial momentum operator
$$ P = -i \; \hbar \dfrac{1}{f(r)} \dfrac{\partial}{\partial r} \; f(r)$$
for appropriate $f(r)$, while the potentials $V_P(x)$ are eigenfunctions of the operator
$$ P_1 = r \dfrac{\partial}{\partial r} r$$
with eigenvalue $C$. We recall next a result which implies that the Yukawa potentials are not smooth enough to be studied in the context of Theorem \ref{thm:RG}. 
 
\begin{theorem}[Theorem 5, \cite{Lewin-preprint}] \label{thm:ML}
Let $V(r)$ be a radial, $C^\infty$ potential that at the origin satisfies
\begin{align} \label{derivativesatzero}
\dfrac{d^k V}{d r^k}(0) = 0 \qquad \forall \; k \geq 1.
\end{align}
Then the original Runge-Gross method applies. 
\end{theorem}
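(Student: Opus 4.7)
The plan is to verify, line by line, that the original Runge--Gross argument goes through under the stated regularity. Suppose $V_1(x,t)$ and $V_2(x,t)$ are two time-analytic external potentials, each satisfying the hypotheses of the theorem in $x$ for every fixed $t$, whose solutions $\psi_1,\psi_2$ of (\ref{Nparticletdse}) from the same initial datum $\psi_0$ produce identical densities $n_1\equiv n_2$. I would Taylor-expand in time, $V_1(x,t)-V_2(x,t)=\sum_{k\ge 0}\tfrac{t^k}{k!}\,w_k(x)$, and note that each coefficient $w_k$ inherits the radial $C^\infty$ property together with the flatness condition (\ref{derivativesatzero}) at the origin; in particular every $w_k$ is smooth and bounded on all of $\mathbb{R}^3$ with $\nabla w_k(0)=0$.

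Next I would use the quantum continuity equation $\partial_t n+\nabla\cdot j=0$ together with the Heisenberg equation for the current density $j=\operatorname{Im}(\bar\psi\nabla\psi)$. An inductive computation, taking repeated time derivatives and evaluating at $t=0$, yields, for the smallest index $k$ such that $w_k$ is non-constant in $x$, the identity
\begin{equation*}
\partial_t^{k+2}(n_1-n_2)\big|_{t=0}=-\nabla\cdot\bigl(n_0\nabla w_k\bigr),\qquad n_0=|\psi_0|^2.
\end{equation*}
The hypothesis $n_1\equiv n_2$ then forces $\nabla\cdot(n_0\nabla w_k)=0$. Multiplying by $w_k$ and applying Green's identity gives $\int_{\mathbb{R}^3} n_0|\nabla w_k|^2\,dx=0$, so $\nabla w_k\equiv 0$ on $\{n_0>0\}$, contradicting the choice of $k$.

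The main obstacle is justifying these formal manipulations rigorously, which is exactly the place where the bare Coulombic case breaks. For a $-1/|x|$ external field, $\psi_0$ only satisfies a Kato cusp condition at the nucleus, so derivatives of the wavefunction are not classically defined there; iterated time differentiations of $j$ and the integration by parts used in Green's identity can then pick up uncontrolled contributions from the singularity. The hypothesis (\ref{derivativesatzero}) removes precisely this obstruction: the potential is $C^\infty$ and bounded on $\mathbb{R}^3$, so by elliptic regularity combined with propagation along the Schr\"odinger flow, $\psi_0$ and $\psi(t)$ are smooth, and the spatial and temporal derivatives involved commute. Moreover, the infinite-order flatness of $w_k$ at the origin together with the rapid decay of $n_0$ at infinity (coming from the bound-state nature of $\psi_0$) forces the boundary contributions in Green's identity to vanish at both $r=0$ and $r=\infty$, closing the argument.
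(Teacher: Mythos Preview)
This theorem is not proved in the paper at all: it is quoted verbatim as Theorem~5 of Lewin's preprint \cite{Lewin-preprint}, and the paper only unpacks what the conclusion \emph{means}. Immediately after the statement the author writes that the content of ``the Runge--Gross method applies'' is the operator--theoretic fact
\[
\mathscr{D}(H_V^k)=\mathscr{D}(H_0^k)\qquad\forall\,k\ge 1,
\]
i.e.\ that multiplication by $V$ preserves the full scale of domains of powers of $H_0$. That is the mechanism Lewin proves and the only thing the paper invokes. Your write--up takes a different tack: you try to re-run the Runge--Gross density/current argument directly and justify each step by appealing to smoothness of $\psi_0$ and $\psi(t)$ via elliptic regularity.

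There is a genuine gap in that route. In the setting of the paper, $H_0$ in (\ref{fullHamiltonianH0}) still contains the bare Coulomb term $-1/|x|$; the hypothesis (\ref{derivativesatzero}) is imposed only on the \emph{external} potential $V$, not on $V_0$. Hence the ground state $\psi_0$ of $H_0$ still carries a Kato cusp at the nucleus, and your sentence ``the potential is $C^\infty$ and bounded on $\mathbb{R}^3$, so by elliptic regularity \ldots $\psi_0$ and $\psi(t)$ are smooth'' is false in this context. The obstruction you identify (uncontrolled boundary contributions at $r=0$) is therefore not removed by the flatness of $V$ alone. The point of Lewin's argument, and of the paper's gloss on it, is precisely that one does \emph{not} need classical smoothness of the wavefunction: instead, the flatness condition (\ref{derivativesatzero}) guarantees that the commutators $[H_0,V]$, $[H_0,[H_0,V]]$, \ldots\ are bounded operators (or at least do not shrink the domain), so that the iterated time derivatives $\partial_t^k\psi|_{t=0}$ all remain in $\mathscr{D}(H_0)$, and the Runge--Gross identities can be justified distributionally in that scale of spaces rather than pointwise. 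Your sketch would be salvageable only in the special case where $V_0$ itself is taken to be smooth (e.g.\ $V_0=V_P$), which the paper explicitly notes is a different and easier situation.
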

We should point out what it means exactly to say that this method applies. If (\ref{derivativesatzero}) holds, then the potential $V$ preserves the domain of $H_0^k$ for $k\geq 1$, i.e.
$$\mathscr{D}(H_V^k) = \mathscr{D}(H_0^k) \quad \forall \; k \geq 1$$
In general, since $H_V^k$ will be unbounded, $H_V^k$ needs to be understood in the sense of the functional calculus \cite{RSFA1}. The domain preservation means then that one can prove a Runge-Gross uniqueness theorem for the Hydrogen atom ($N=1$). Notice that the classical Coulomb potential does not satisfy (\ref{derivativesatzero}); hence the need for a smoothed out variant of the classical Coulomb potential. 

Now, we immediately see that the Runge-Gross method will not apply with Yukawa potentials\footnote{Without loss of generality we set $\alpha=1$.}. Indeed, we find
$$\dfrac{d V_Y}{d r} = -\dfrac{e^{-c r}(1+c r)}{r^2}$$
which does not vanish at the origin, since $c > 0$. 


However, by induction we have that (\ref{derivativesatzero}) holds for the potentials $V_P(x)$. Thus by Theorem \ref{thm:ML}, we have that
$$\mathscr{D}\left( H_{V_P}^k \right) = \mathscr{D} \left( H_0^k \right) \quad \forall \; k \geq 1$$
This in particular implies that if we take $V_0(x) = V_P(x)$ (rather than $V_0$ given by (\ref{V0singular})) and $V(x,t)$ a smooth external potential, then by \cite[Theorem 3]{Lewin-preprint}, a Runge-Gross uniqueness result holds in this setting. However, this is not necessarily the case if $V_0$ is given by (\ref{V0singular}). 

To this end, we allow for such singular fixed potentials $V_0$, and we introduce the time dependence into the potentials $V_P(x)$ by setting
\begin{align} \label{timedependenceVp}
V(x,t)= V_P(x-r(t)).
\end{align}
\noindent A similar setup of W\"uller \cite{Wuller86} gives existence of solutions to (\ref{Nparticletdse}) for certain singular time dependent potentials, provided the $N$ particle system is such that $N-1$ particles can be treated classically, and the last particle quantum mechanically. Our next section is devoted to solving the same equation with external potential given by (\ref{timedependenceVp}). 



%



\section{Existence of Solutions for $V_P$} \label{sec:existence}

Our main theorem for the Hydrogen atom is:


\begin{theorem} \label{maintheorem}
Consider the Hamiltonian (\ref{fullHamiltonian}) with $H_0$ given by (\ref{fullHamiltonianH0}) and the external potential given by
\begin{align*}
V(x,t) = V_P(x-r(t)) = \dfrac{e^{-C/|x-r(t)|}}{|x-r(t)|}
\end{align*}
with $r(t)$ continuous. Then, there exists a solution to the Schr\"odinger equation (\ref{Nparticletdse}) with $N=1$ given by a unitary propagator $U(t,s)$, such that the solution $\psi(t) = U(t,s)\psi_0$ is continuously differentiable in $L^2 (\mathbb{R}^3)$ for all $\psi_0 \in \mathscr{D}(H_0) \cap \mathscr{D}(x^2)$ and such that $\psi(t) \in \mathscr{D}(H_0) \cap \mathscr{D}(x^2)$ for $t>0$.
\end{theorem}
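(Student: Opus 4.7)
The plan is to construct the propagator via the Dyson series, following the time-dependent perturbation theory framework of Simon 1973. A direct induction shows that $V_P$ extends to a $C^\infty(\mathbb{R}^3)$ function with $V_P$ and every partial derivative bounded on $\mathbb{R}^3$: the essential singularity $e^{-C/|x|}$ at the origin dominates every polynomial, while at infinity $V_P(x) \sim 1/|x|$ with derivatives decaying accordingly. In particular $\|V_P\|_\infty =: M < \infty$ and $V_P$ is uniformly continuous on $\mathbb{R}^3$. Together with continuity of $r(t)$, this shows that $t \mapsto V(\cdot,t)$, as a family of bounded self-adjoint multiplication operators on $L^2(\mathbb{R}^3)$, is continuous in operator norm with uniform bound $M$. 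Passing to the interaction picture, $\tilde V(t) := e^{itH_0} V(t) e^{-itH_0}$ inherits these properties, and the Dyson series
\[
\tilde U(t,s) := I + \sum_{n \geq 1} (-i)^n \int_{s \leq t_n \leq \cdots \leq t_1 \leq t} \tilde V(t_1) \cdots \tilde V(t_n)\, dt_n \cdots dt_1
\]
converges absolutely in operator norm with bound $e^{M|t-s|}$. Standard manipulations give $\tilde U(t,s)^* \tilde U(t,s) = I = \tilde U(t,s) \tilde U(t,s)^*$ together with the cocycle identity, so $U(t,s) := e^{-itH_0} \tilde U(t,s) e^{isH_0}$ is a unitary propagator on $L^2(\mathbb{R}^3)$.

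The crux is upgrading this $L^2$-level existence to preservation of $\mathscr{D}(H_0) \cap \mathscr{D}(x^2)$. I introduce the harmonic oscillator $N := -\Delta + x^2$, whose domain equals $\mathscr{D}(H_0) \cap \mathscr{D}(x^2)$ (using Kato-Rellich to absorb the $-1/|x|$ piece of $H_0$ into $-\Delta$). Because $V_P$ is bounded, $V(t)$ sends $\mathscr{D}(x^2)$ into itself trivially; since $\nabla V_P$ and $\Delta V_P$ are bounded, $V(t)$ also sends $H^2(\mathbb{R}^3) = \mathscr{D}(H_0)$ into itself. Quantitatively, the commutator
\[
[N, V(t)] = -2\, \nabla V(\cdot,t) \cdot \nabla - (\Delta V)(\cdot,t)
\]
is a first-order operator with coefficients uniformly (in $t$) bounded, and hence relatively $N^{1/2}$-bounded uniformly in $t$. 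Feeding this commutator estimate termwise into the Dyson series and summing, or equivalently estimating $\|N U(t,s)\psi\|$ via formal differentiation plus Gronwall, shows $U(t,s)$ maps $\mathscr{D}(N)$ into itself with $\|N U(t,s)\psi\| \leq K(t,s)\|N\psi\|$ for some locally bounded $K$.

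For $\psi_0 \in \mathscr{D}(N)$ the integral equation
\[
\psi(t) = e^{-itH_0}\psi_0 - i \int_0^t e^{-i(t-\tau)H_0} V(\tau)\psi(\tau)\, d\tau
\]
has $L^2$-continuous integrand, by operator-norm continuity of $V(t)$ and the domain bound just established, so differentiating yields $\psi \in C^1([0,\infty); L^2(\mathbb{R}^3))$ satisfying the Schr\"odinger equation strongly. The main obstacle is the domain-preservation step: since $r(t)$ is assumed only continuous, not differentiable, one cannot invoke classical Kato theorems that require strong $C^1$ dependence of $V(t)$ on time. The workaround is to rely solely on the uniform-in-$t$ commutator bound $\|[N, V(t)] N^{-1/2}\| \leq \textrm{const}$, which makes no regularity demand on $r$, and to exploit the absolute convergence of the Dyson series to propagate the bound through all orders.
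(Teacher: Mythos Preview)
Your approach and the paper's are essentially the same. The paper invokes a black-box result (its Theorem~3.2, from Tanabe/W\"uller) whose hypotheses reduce to three facts about $V_P$: boundedness of $V_P$, boundedness of $\partial_j V_P$, and $\Delta V_P \in L^2(\mathbb{R}^3) + L^\infty(\mathbb{R}^3)$, each verified by direct computation. You check the same derivative bounds (in fact the slightly sharper observation $\Delta V_P \in L^\infty$), and then unpack the mechanism behind that cited theorem---Dyson series in the interaction picture, plus a commutator estimate against the oscillator $N = -\Delta + x^2$---rather than quoting it. So the substance is identical; you are writing out what the paper cites.

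One caveat worth flagging: your ``formal differentiation plus Gronwall'' step for $\|N U(t,s)\psi\|$ is where the genuine technical work of the cited theorem sits. The free group $e^{-itH_0}$ does not obviously preserve $\mathscr{D}(x^2)$, and differentiating $\|N\psi(t)\|$ a priori requires $\psi(t)$ to lie in a higher domain. The standard proof handles this by working with the self-adjoint operator $H_0 + x^2$ and its unitary group (which does preserve $Y$) rather than with $e^{-itH_0}$ and $N$ separately; your sketch elides this point. It is not a fatal gap---the argument can be made rigorous exactly along the lines of the reference the paper cites---but as written it is the one place where ``formal'' is doing real work.
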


We mention here that the condition that $\psi_0 \in \mathscr{D}(x^2)$ is not unnatural; rather, most ground state wavefunctions in practice will satisfy this condition inherently. For instance, the ground state wavefunction of Hydrogen is
$$\psi_a(r) = a e^{-r}, \quad r = |x| = \sqrt{x^2+y^2 +z^2}$$
for some constant $a$. Then we clearly have $\psi_a \in L^2(\mathbb{R}^3)$ but we also have $r^2 \psi_a \in L^2(\mathbb{R}^3)$. This can be verified directly by computing the integral in spherical coordinates. 

Additionally, only continuity of the nuclear trajectory is required in Theorem \ref{maintheorem}. Hence the path does not need even to be differentiable.  

\subsection{Proof of Theorem \ref{maintheorem}}
We will use the following Theorem (which can be found in \cite{tanabe1979equations}) to prove Theorem \ref{maintheorem}. First note that $$\mathscr{D}(H_0) = \left\{ \psi \in L^2(\mathbb{R}^3): \Delta \psi \in L^2(\mathbb{R}^3)\right\}.$$
Note that in the previous example of the Hydrogen atom, it can be verified directly that $\psi_a \in \mathscr{D}(H_0)$.

\begin{theorem}[\cite{tanabe1979equations}] \label{Wullertheorem2}
Let $\{V(t)\}_{t \in [0, T]}$ be a family of bounded functions from $\mathbb{R}^3 \to \mathbb{R}$ such that:

\begin{enumerate}
\item $V(t) \mathscr{D}(H_0) \subset \mathscr{D}(H_0)$ for all $t$;
\item $V(t)$ is strongly continuous as a multiplication operator in $L^2(\mathbb{R}^3)$;
\item For all $\psi_0 \in Y$, the function $t \mapsto [H_0, V(t)]\psi_0$ is continuous in $L^2 (\mathbb{R}^3)$
where $Y = (\mathscr{D}(H_0 + x^2), ||\cdot ||_{Y})$ with $||\psi||_Y^2 = ||(H_0 + x^2)\psi||^2 + ||\psi||^2$. 
\end{enumerate}

Then, there exists a unitary propagator $U(t,s): Y \to Y$ such that $\psi_s(t) = U(t,s)\psi_0$ is continuously differentiable in $L^2(\mathbb{R}^3)$ for all $\psi_0 \in Y$ with
$$\dfrac{\partial U(t,s)\psi_0}{\partial t} =-iH_V(t) U(t,s) \psi_0.$$
\end{theorem}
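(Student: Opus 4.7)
The strategy is to construct $U(t,s)$ as a norm-convergent Dyson series in the interaction picture, verify unitarity on $L^2(\mathbb{R}^3)$ by a piecewise-constant approximation, and then promote its regularity to the smaller space $Y$ by a commutator/Gronwall estimate. Because each $V(t)$ is a bounded multiplication operator, the Kato--Rellich theorem gives that $H_V(t) = H_0 + V(t)$ is self-adjoint on the constant domain $\mathscr{D}(H_0)$. Set $V_I(\tau) := e^{i\tau H_0} V(\tau) e^{-i\tau H_0}$; this is a uniformly norm-bounded, strongly continuous family by hypothesis (2) and the strong continuity of $e^{i\tau H_0}$. The interaction-picture propagator
\[
U_I(t,s) = I + \sum_{n=1}^\infty (-i)^n \int_s^t \! dt_1 \int_s^{t_1} \! dt_2 \cdots \int_s^{t_{n-1}} \! dt_n \; V_I(t_1) V_I(t_2) \cdots V_I(t_n)
\]
converges in operator norm by the standard Volterra majorization, and I would define $U(t,s) := e^{-i(t-s)H_0} U_I(t,s)$.

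The candidate $U(t,s)$ is unitary: approximate $\tau \mapsto V(\tau)$ strongly by step functions $V^{(n)}(\tau)$, for which the Dyson series collapses into a finite product of unitary groups $e^{-i\Delta\tau H_{V^{(n)}_j}}$ and is therefore unitary; pass to the limit using the uniform Dyson majorization to preserve both $UU^* = I$ and $U^*U = I$. Strong differentiability in $L^2$ of $t \mapsto U(t,s)\psi_0$ for $\psi_0 \in Y \subset \mathscr{D}(H_0)$ then follows by termwise differentiation of the series. Hypothesis (1) ensures that each partial sum lies in $\mathscr{D}(H_0)$, so combining the derivatives of the free and interaction factors yields
\[
\tfrac{\partial}{\partial t} U(t,s)\psi_0 = -iH_V(t) U(t,s)\psi_0,
\]
and the right side is continuous in $t$ since $t \mapsto V(t)\varphi$ is strongly continuous for each fixed $\varphi$.

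The main obstacle is showing that $U(t,s)$ leaves $Y$ invariant with a bound in the $Y$-norm. Set $A := H_0 + x^2$. Formally,
\[
A \, U(t,s)\psi_0 \;=\; U(t,s)A\psi_0 \;+\; \int_s^t U(t,\tau) \, [A, H_V(\tau)] \, U(\tau,s)\psi_0 \, d\tau.
\]
Since $V(\tau)$ is a multiplication operator, $[x^2, V(\tau)] = 0$, so $[A, H_V(\tau)] = [H_0, V(\tau)] + [x^2, H_0]$; hypothesis (3) makes the first summand continuous in $L^2$ along the orbit, and $[x^2, H_0]$ is a first-order differential operator whose $L^2$-norm on vectors in $Y$ is controlled by $\|\cdot\|_Y$. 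This is precisely the reason the natural domain is enlarged by the \emph{harmonic-oscillator} piece $x^2$. A Gronwall argument in $t$ then yields a uniform estimate of the form $\|U(t,s)\psi_0\|_Y \le e^{C(t-s)} \|\psi_0\|_Y$, from which $U(t,s) \colon Y \to Y$ as claimed.

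The hard part is justifying the formal identity above, since $A$ is unbounded and moving it past $U(t,\tau)$ is a priori illegal. I would regularize by replacing $H_0$ with its Yosida approximation $H_0^\lambda := H_0(1+\lambda H_0)^{-1}$ so that all commutators are genuine and all operators bounded, build the corresponding regularized propagator $U^\lambda(t,s)$ from the same Dyson expansion, prove the regularized integrated identity by direct differentiation, and then pass to the limit $\lambda \downarrow 0$. Hypothesis (3) together with the dominated convergence theorem, applied paragraph-by-paragraph to the commutator term, delivers the unregularized identity. A density argument in $Y$ then completes the proof.
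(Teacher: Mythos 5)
The paper itself contains no proof of this statement: it is imported verbatim from Tanabe's book, and the only indication of the construction is the Dyson series of Reed--Simon Theorem X.69, displayed (loosely) with the full unbounded Hamiltonian $H(t_j)$ inside the iterated integrals. Your interaction-picture version, with the bounded, strongly continuous family $V_I(\tau)=e^{i\tau H_0}V(\tau)e^{-i\tau H_0}$, is the form in which that series actually converges, so your construction coincides with the one the paper points to, and your step-function unitarity argument and commutator/Gronwall scheme for invariance of $Y=\mathscr{D}(H_0+x^2)$ are the standard ingredients of the cited proof; in that sense you are filling in details the paper omits rather than taking a different route.

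There is, however, a genuine gap in the order of your argument. You claim strong $L^2$-differentiability of $t\mapsto e^{-i(t-s)H_0}U_I(t,s)\psi_0$ by termwise differentiation, justified because ``each partial sum lies in $\mathscr{D}(H_0)$'' by hypothesis (1). Since $H_0$ is unbounded, membership of the partial sums in $\mathscr{D}(H_0)$ says nothing about the $L^2$-limit: to differentiate the free factor you need $U_I(t,s)\psi_0\in\mathscr{D}(H_0)$, i.e.\ convergence of the series in the graph norm of $H_0$, and at that point of your argument this has not been established --- it is precisely what the $Y$-estimate is for. The correct order is to first prove, on the regularized (or step-function) propagators, the identity
\[
A\,U(t,s)\psi_0=U(t,s)A\psi_0-i\int_s^t U(t,\tau)\,[A,H_V(\tau)]\,U(\tau,s)\psi_0\,d\tau,\qquad A=H_0+x^2
\]
(you dropped the factor $-i$, which is harmless for the estimates), run Gronwall to get $\|U(t,s)\psi_0\|_Y\le e^{c|t-s|}\|\psi_0\|_Y$ --- noting that for this you must upgrade hypothesis (3), via uniform boundedness on the compact interval $[0,T]$, to a quantitative bound $\|[H_0,V(t)]\phi\|\le c\|\phi\|_Y$ rather than mere continuity --- and only then conclude that the orbit stays in $\mathscr{D}(H_0)$ and carry out the differentiation. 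Your regularization proposal is the right way to break the circularity (the unregularized identity cannot be invoked directly, since it presupposes $U(\tau,s)\psi_0\in Y$), but the limit $\lambda\downarrow 0$ is where essentially all the analytic work of the theorem sits, and ``dominated convergence, paragraph-by-paragraph'' is not yet an argument: you need strong convergence $U^\lambda(t,s)\to U(t,s)$ together with a $\lambda$-uniform $Y$-bound, and then weak compactness or lower semicontinuity of the graph norm to conclude $U(t,s)\psi_0\in Y$.
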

The propagator from the above theorem is constructed using the so-called Dyson series \cite[Theorem X.69]{RS2}
$$U(t,s)\phi = 1+\sum_{j=1}^\infty (-\mathrm{i})^j \int_s^t \int_s^{t_1} \idotsint \int_s^{t_{j-1}} H(t_1) \cdots H(t_j) \phi dt_j ...dt_1$$

We proceed now to the 
\begin{proof}[Proof of Theorem \ref{maintheorem}]
We show that the family $V(t)$ defined by
\begin{align} \label{V2def}
\left(V(t)\right)[x] \coloneqq V_P(x-r(t))
\end{align}
satisfies the requirements of the previous theorem. For this it is enough to show that $V_P: \mathbb{R}^3 \to \mathbb{R}$ is bounded, $\partial_j V_p \in L^\infty (\mathbb{R}^3)$ for $1 \leq j \leq 3$, and 
$$\Delta V_p \in L^2 (\mathbb{R}^3) + L^\infty(\mathbb{R}^3)$$ in the sense of distributions\footnote{See the Lemma on pg. 169 in \cite{Wuller86}, or just verify directly.}. We check that the derivatives of $V_P(x)$ with respect to $x_j$ for $j=1,2,3$ are bounded. First,
		$$\dfrac{\partial}{\partial x_j}\dfrac{1}{|x|}=-\dfrac{x_j}{|x|^3}.$$
	Next, for all $C>0,$
		$$\dfrac{\partial V_p(x)}{\partial x_j}=-e^{-C/|x|}\dfrac{Cx_j}{|x|^4}-e^{-C/|x|}\dfrac{x_j}{|x|^3}.$$
	Both summands are in $ L^\infty (\mathbb{R}^3)$ because they are continuous, equal $0$ at the origin, and decay to $0$ as $|x|$ tends to $\infty.$ For this reason we also have that $V_P \in L^\infty (\mathbb{R}^3)$. 
	
	To compute the Laplacian, we simply take one more derivative to what was computed above:
		\begin{align*}
		& \dfrac{\partial}{\partial x_j} \dfrac{e^{-C/|x|}}{|x|}\Bigg(\dfrac{Cx_j}{|x|^3}+\dfrac{x_j}{|x|^2}\Bigg)\\
		& =e^{-C/|x|}\Bigg(\dfrac{Cx_j}{|x|^4}+\dfrac{x_j}{|x|^3}\Bigg)\Bigg(\dfrac{Cx_j}{|x|^3}+\dfrac{x_j}{|x|^2}\Bigg)+\dfrac{e^{-C/|x|}}{|x|}\Bigg(\dfrac{1}{|x|^2}+\dfrac{C}{|x|^3}-\dfrac{2x_j^2}{|x|^4}-\dfrac{3Cx_j^2}{|x|^5}\Bigg).
		\end{align*}
	After summing over $j,$ we see $\Delta V_p(x)$ is equal to 
		$$e^{-C/|x|}\Bigg(\dfrac{2}{|x|^3}+\dfrac{2C}{|x|^4}+\dfrac{C^2}{|x|^5}\Bigg).$$
	Notice that this can be written as a sum\footnote{Note that since $L^2(\mathbb{R}^3)$ is a Banach space, the sum of three $L^2(\mathbb{R})$ functions certainly belongs to $L^2 (\mathbb{R}^3)$.} of three functions in $L^2(\mathbb{R}^3),$ since $|x|^{\beta}$ is square integrable in $\mathbb{R}^3 \setminus B_1(0)$ for all $\beta<-3/2,$ and because $|x|^\beta e^{-C/|x|}$ is bounded on $B_1(0)$ for all $\beta\in\mathbb{R}.$

\end{proof}

\begin{remark}
We point out here the classical result of Kato in \cite{kato1953integration}, which provides another (less constructive) method to establish existence of solutions. This method relies on the Kato-Rellich Theorem, which in one form says that if two operators $A$ and $B$ are such that $A$ is self-adjoint and $B$ is symmetric and $A-$bounded with bound strictly less than one, then $A+B$ is self-adjoint on  $\mathscr{D}(A)$. Recall that $B$ is said to be $A$-bounded with bound $a$ if for all $\phi \in \mathscr{D}(A)$, there holds
$$||B\phi|| \leq a ||A\phi|| + b ||\phi||$$
for some $b \in \mathbb{R}$. We apply this theorem with the Hamiltonian $H_V (t) = H_0 +V(t)$ by taking $A=-\Delta$ and $B= V_0(x) - V(x,t)$ with $V_0$ given by (\ref{V0singular}) and $V(x,t)$ given by (\ref{timedependenceVp}). Since $V(x,t)$ is bounded, we conclude that $V_0(x) -V(x,t)$ is $-\Delta$-bounded with bound strictly smaller than one, and thus 
$$\mathscr{D}(H_V(t)) = \mathscr{D}\left( -\Delta - V_0(x) +V(x,t) \right) = \mathscr{D}(-\Delta)$$
is in particular independent of time. Then by \cite[Theorem 3]{kato1953integration}, a continuously differentiable solution of (\ref{Nparticletdse}) for $N=1$ can be found. However, it is unclear via this result if the solution would remain in the domain of the quantum harmonic oscillator for positive times. 
\end{remark}

\section{A general $N$ particle System} \label{sec:Nparticleexistence}
The setup of the $N$ particle, $M$ nuclei problem is as follows. We take
\begin{align} \label{NparticleVp}
V_P(x_j, t) \coloneqq - \sum_{m=1}^M \dfrac{Z_m e^{-C/|x_j -r_m(t)|}}{|x_j-r_m(t)|}
\end{align}
where $Z_m>0$ denotes the charge of nucleus $m$, and $x_j \in \mathbb{R}^3$ denotes the position of electron $j$, for $1 \leq j \leq N$. The underlying Hamiltonian takes the form
\begin{align} \label{NparticleHamiltonian}
H_V(t) = \underbrace{\sum_{j=1}^N -\Delta_{x_j} -\dfrac{1}{|x_j|}  +\sum_{1 \leq j < k \leq N} \dfrac{1}{|x_j - x_k|}}_{H_0} - \sum_{m=1}^M Z_m W(X -r_m(t)) 
\end{align}
where
$$W(X) = W(x_1,...,x_N) = \sum_{j=1}^N \dfrac{e^{-C/|x_j|}}{|x_j|}.$$
and the term$\sum_{j \neq k} |x_j-x_k|^{-1}$ is the Coulomb interaction between electrons. By Kato's Theorem \cite[Theorem X.16]{RS2}, we have that $H_0$ is essentially self-adjoint on $C_0^\infty (\mathbb{R}^{3N})$. That is, $H_0$ is densely defined, symmetric, and its closure $\overline H_0$ is self-adjoint. We denote the self-adjoint extension again by $H_0$.

In this case we consider the family of operators $\{V(t)\}_{t \in [0, T]}$ such that each $V(t): \mathbb{R}^{3N} \to \mathbb{R}$ are are given by
$$[V(t)](X) = [V(t)](x_1,...,x_N)= -\sum_{j=1}^N \sum_{m=1}^M \dfrac{Z_m e^{-C/|x_j -r_m(t)|}}{|x_j -r_m(t)|}.$$
We will show that the family $\{ V(t) \}$, as before, satisfies the requirements of Theorem \ref{Wullertheorem2}, now in $\mathbb{R}^{3N}$. It is a straightforward generalization of the proof of Theorem \ref{maintheorem} that items (1) and (2) hold in $\mathbb{R}^{3N}$. One has to be careful about item (3) in this theorem, however. Instead of considering $\mathscr{D}(H_0 + x^2) = \mathscr{D}(H_0) \cap \mathscr{D}(x^2)$, we have to consider now a general $N$ dimensional quantum harmonic oscillator Hamiltonian $H_0 + \sum_{j=1}^N x_j^2$. This has domain
$$\mathscr{D} \left( H_0 + \sum_{j=1}^N x_j^2 \right) = \mathscr{D}(H_0) \cap \mathscr{D} \left( \sum_{j=1}^N x_j^2\right) = \mathscr{D}(H_0) \cap \left[ \cap_{j=1}^N \mathscr{D}(x_j^2) \right].$$
This $N$ dimensional quantum harmonic oscillator Hamiltonian, a key tool used in the proof of Theorem \ref{Wullertheorem2}, is self-adjoint on $\mathscr{D}(H_0) = L^2(\mathbb{R}^{3N})$. Thus we have:

\begin{theorem} \label{Nparticletheorem}
Consider the Hamiltonian (\ref{NparticleHamiltonian}) as above with potential given by (\ref{NparticleVp}). Then there exists a solution to the $N$ particle Schr\"odinger equation (\ref{Nparticletdse}) given by a unitary propagator $U(t,s)$, such that the solution $\psi(t) = U(t,s)\psi_0$ is continuously differentiable in $L^2(\mathbb{R}^{3N})$ for all $\psi_0 \in \mathscr{D}(H_0) \cap  \left[ \cap_{j=1}^N \mathscr{D}(x_j^2) \right]$ and such that $\psi(t) \in \cap_{j=1}^N \mathscr{D}(x_j^2)$ for $t>0$.
\end{theorem}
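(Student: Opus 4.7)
The plan is to verify that the family $\{V(t)\}_{t \in [0,T]}$ given in the excerpt satisfies the three hypotheses of the $N$-particle analogue of Theorem \ref{Wullertheorem2}, with $Y = \mathscr{D}(H_0 + \sum_{j=1}^N x_j^2)$ replacing the single-particle $Y$. Since $V(t)$ is a finite double sum, by linearity and the triangle inequality it suffices to treat a single summand $V_{j,m}(t)(X) \coloneqq Z_m V_P(x_j - r_m(t))$, viewed as a multiplication operator on $L^2(\mathbb{R}^{3N})$ that acts trivially in the $x_k$ variables for $k \neq j$.

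Hypotheses (1) and (2) essentially lift the single-particle argument used in the proof of Theorem \ref{maintheorem}: $V_P$, $\nabla V_P$, and $\Delta V_P$ are bounded on $\mathbb{R}^3$, so $V(t)$ is a uniformly bounded multiplication operator. It preserves $\mathscr{D}(H_0)$ because applying $-\Delta_{x_k}$ to $V(t)\psi$ via the Leibniz rule produces only bounded-times-bounded and bounded-times-$L^2$ terms, all of which lie in $L^2(\mathbb{R}^{3N})$. Strong continuity in $t$ follows from the uniform continuity of $V_P$ (guaranteed by its bounded gradient), continuity of each $r_m(t)$, and dominated convergence.

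For hypothesis (3) I would first note that every multiplicative part of $H_0$ -- the Coulomb attractions $-|x_j|^{-1}$ and the electron-electron repulsions $|x_j - x_k|^{-1}$ -- commutes with $V(t)$, so
$$[H_0, V(t)]\psi_0 = \sum_{k=1}^N [-\Delta_{x_k}, V(t)]\psi_0 = -\sum_k (\Delta_{x_k} V(t))\psi_0 - 2\sum_k \nabla_{x_k} V(t)\cdot \nabla_{x_k}\psi_0.$$
For $\psi_0 \in Y \subset \mathscr{D}(H_0)$ each $\nabla_{x_k}\psi_0$ lies in $L^2(\mathbb{R}^{3N})$, and $\nabla_{x_k} V(t)$ is bounded, so the gradient term is in $L^2$ and depends continuously on $t$ by dominated convergence.

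The main obstacle is the zeroth-order piece $(\Delta_{x_k} V(t))\psi_0$. In the single-particle argument this was handled by placing $\Delta V_P$ in $L^2(\mathbb{R}^3) + L^\infty(\mathbb{R}^3)$, but that decomposition does not directly lift to $\mathbb{R}^{3N}$, since a nonzero function of $x_j$ alone is never square integrable on $\mathbb{R}^{3N}$. The remedy is to exploit a stronger property: the computation of $\Delta V_P$ carried out in the proof of Theorem \ref{maintheorem} displays it as a finite sum of functions of the form $e^{-C/|y|}|y|^{-k}$, each of which is globally bounded on $\mathbb{R}^3$ because the exponential annihilates every polynomial singularity at $y = 0$ and each negative power of $|y|$ vanishes at infinity. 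Hence $\Delta_{x_k} V(t) \in L^\infty(\mathbb{R}^{3N})$, so $(\Delta_{x_k} V(t))\psi_0 \in L^2(\mathbb{R}^{3N})$, and continuity in $t$ again follows from continuity of $r_m(t)$ and dominated convergence. Combining the verified hypotheses with the self-adjointness of the $N$-dimensional quantum harmonic oscillator Hamiltonian $H_0 + \sum_j x_j^2$ on its natural domain, the $N$-particle version of Theorem \ref{Wullertheorem2} yields the unitary propagator $U(t,s)$ with the stated domain-preservation property.
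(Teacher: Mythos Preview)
Your proposal is correct and follows the same overall strategy as the paper: verify the hypotheses of Theorem \ref{Wullertheorem2} (in its $N$-particle form) for the family $\{V(t)\}$ and invoke it. The paper's own argument is terse---it simply asserts that items (1) and (2) are ``a straightforward generalization'' of the single-particle proof and that for item (3) one must replace the one-particle harmonic oscillator domain by $\mathscr{D}(H_0)\cap\bigcap_j\mathscr{D}(x_j^2)$.

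Where you go further is in spotting and resolving a genuine technical wrinkle the paper does not flag. In the $N=1$ proof the Laplacian of $V_P$ was placed in $L^2(\mathbb{R}^3)+L^\infty(\mathbb{R}^3)$, and in fact the paper only exhibits it as an element of $L^2(\mathbb{R}^3)$. As you correctly observe, that decomposition cannot be transported verbatim to $\mathbb{R}^{3N}$, since a nonzero function of $x_j$ alone is never in $L^2(\mathbb{R}^{3N})$. Your remedy---noting from the explicit formula $\Delta V_P(y)=e^{-C/|y|}\bigl(2|y|^{-3}+2C|y|^{-4}+C^2|y|^{-5}\bigr)$ that each term is actually bounded on $\mathbb{R}^3$, hence $\Delta_{x_j}V(t)\in L^\infty(\mathbb{R}^{3N})$---is exactly the right strengthening and makes the $N$-particle commutator estimate go through cleanly. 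This is a refinement rather than a different route, but it fills a gap the paper's phrase ``straightforward generalization'' leaves implicit.
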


We end this section by taking a specific ground state wavefunction as we did with the Hydrogen atom. Consider the instance the Helium atom ($N=2$), which has ground state wavefunction of the form
$$\psi_{He} (r_1, r_2) = k \cdot e^{-(r_1 + r_2)}, \qquad r_j = \sqrt{x_j^2 +y_j^2+ z_j^2},\;  \; j=1,2$$
for some constant $k$. Then the operator $H_0 = -\Delta_1 -\Delta_2$ where the $\Delta_j$ denotes the Laplacian acting on particle $j$. Then,
$$\mathscr{D}(H_0) = \mathscr{D}\left( -\Delta_1 -\Delta_2\right) = \mathscr{D}(-\Delta_1) \cap \mathscr{D}(-\Delta_2)$$
As in the case of the Hydrogen atom, we see that $\psi_{He} \in L^2 (\mathbb{R}^6)$ which we identify with $L^2(\mathbb{R}^3) \otimes L^2(\mathbb{R}^3)$. In particular, $\psi_{He} \in \mathscr{D}(H_0) \cap \mathscr{D}(r_1) \cap \mathscr{D}(r_2)$. 

\section{Fourier Analysis} \label{sec:FA}
In this section we give a rigorous analysis of the Fourier transform of $V_P$, and show that in the limit as the parameter $C$ approaches zero, we recover the Fourier transform of the Coulomb potential. This is frequently done with the potentials $V_Y$, but here we prove it can be done with $V_P$ as well. 

We view the potential $V_P$ not as a function but as a distribution. Since $V_P \not\in L^1 (\mathbb{R}^3)$, we take the Fourier transform as defined on the space of tempered distributions. We discuss this in detail next. 

\subsection{Fourier Transform and Distributions} \label{sec:Distributions}

Recall that a multi-index $\alpha$ in $\mathbb{R}^3$ is a vector $\alpha = (\alpha_1, \alpha_2, \alpha_3)$ where each $\alpha_j \in \mathbb{Z}_+$. The length of $\alpha$ is $|\alpha| = \alpha_1 + \alpha_2 + \alpha_3$. Then, given a multi-index $\alpha$, we can define
$$D^\alpha f \coloneqq \dfrac{\partial^{|\alpha|} f}{\partial x^{\alpha_1} \partial y^{\alpha_2} \partial z^{\alpha_3}} = \partial^{\alpha_1}_x \partial^{\alpha_2}_y \partial^{\alpha_3}_z f$$
for a differentiable function $f=f(x,y,z)$. Also, for a vector $x=(x_1,...,x_n) \in \mathbb{R}^n$ and a multi-index $\alpha$, we define $x^\alpha = x_1^{\alpha_1} x_2^{\alpha_2}\cdots x_n^{\alpha_n}$. Furthermore, the support of a function $f$ is the closure of the set of points where $f$ does not vanish. 

We will define the Fourier transform on a subset of distributions, and for this we need the following definitions.

\begin{definition}
Let $\Omega \subset \mathbb{R}^3$ be an open set. A linear functional $u: C_0^\infty (\Omega) \to \mathbb{C}$ is called a distribution if for every compact set $K \subset\Omega$, there exists $C \geq 0$ and a positive integer $N$ such that
$$\left| \int u \phi dx \right| \leq C \sum_{|\alpha| \leq N} \sup |\partial^\alpha \phi|$$
for all $\phi \in C_0^\infty (\Omega)$ whose support lies in $K$. The space of distributions on $\Omega$ is denoted $\mathscr{D}'(\Omega)$.
\end{definition}

\begin{definition}
A function $\phi \in C^\infty (\mathbb{R}^3)$ is called rapidly decreasing if 
$$\sup_{x \in \mathbb{R}^3} |x^\alpha D^\beta \phi| < \infty$$
for all multi-indices $\alpha, \beta$. The space of such functions is denoted $\mathscr{S}(\mathbb{R}^3)$. 
\end{definition}

In particular, a sequence $\{\phi_j\}_j \in \mathscr{S}(\mathbb{R}^3)$ is said to converge to zero in $\mathscr{S}(\mathbb{R}^3)$ if 
$$||\phi_j||_{\alpha, \beta} \coloneqq \sup_x |x^\alpha D^\beta \phi| \longrightarrow 0$$ as $j \to \infty$ for all $\alpha, \beta$. Now, the dual of $\mathscr{S}(\mathbb{R}^3)$ is the collection of all continuous linear functionals on $\mathscr{S}(\mathbb{R}^3)$. This space is denoted $\mathscr{S}'(\mathbb{R}^3)$, and it is on this space we define the Fourier transform (the members of this space are called tempered distributions). 

If $u \in L^1 (\mathbb{R}^3)$ then its classical Fourier transform $\widehat u(\xi)$ can be viewed as an element of $\mathscr{S}'(\mathbb{R}^3)$ (see \cite[Theorem 8.1.3]{Friedlander98}). Using this we can define the Fourier transform on $\mathscr{S}'(\mathbb{R}^3)$ as follows. 

\begin{definition}
The Fourier transform of $u \in \mathscr{S}'(\mathbb{R}^3)$ is the distribution $\widehat u \in \mathscr{S}'(\mathbb{R}^3)$ defined by
$$\int \widehat u(\xi) \phi(\xi) d\xi = \int u(y) \widehat \phi(y) dy, \qquad \phi \in \mathscr{S}(\mathbb{R}^3)$$
\end{definition}
The fact that the Fourier transform makes sense on $\mathscr{S}(\mathbb{R}^3)$ follows from \cite[Theorem 8.2.3]{Friedlander98}. Finally, if $u \in L^1(\mathbb{R}^3)$, then its distributional Fourier transform agrees with its ``classical" Fourier transform.

\subsection{Computing the Fourier Transform}
We can compute explicitly the Fourier transform of $V_P$, but in taking the limit as $C\to 0$, we need to understand this in the distributional sense. 

To this end we employ the radial Fourier transform, which can be defined using the Bessel function $J_\nu(t)$, defined by

$$J_\nu (t) \coloneqq \dfrac{t^{\nu}}{(2\pi)^{\nu +1}} \omega_{2\nu} \int_0^\pi e^{-i t \cos(\theta)} \sin(\theta)^{2\nu} d\theta, \qquad \nu \in \mathbb{R}$$
where $\omega_n$ denotes the area of the unit sphere $S^n$. Now, the Fourier transform of a radial function $F= F(r)$ in $\mathbb{R}^3$ is given by \cite{SteinFourier}

\begin{align} \label{FTofRadial}
\widehat F (\xi) = 2 \int_0^\infty \dfrac{\sin(2\pi r|\xi|)}{|\xi|} r F(r)dr
\end{align}
This representation relies on the fact that
$$J_{1/2}(t) = \dfrac{t^{1/2}}{\sqrt{2\pi}} \cdot \dfrac{2 \sin(t)}{t}$$
Now, since $V_P \not\in L^1(\mathbb{R}^3)$, to employ the classical Fourier transform we introduce the function
$$V_P^k (\xi) \coloneqq \dfrac{e^{-C/|\xi|}}{|\xi|} e^{k|\xi|}, \qquad k < 0$$
Then the function $V_P^k \in L^1(\mathbb{R}^3)$ so we can take its Fourier transform using (\ref{FTofRadial}). Indeed, 

$$
\widehat V_P^k (\xi) = \dfrac{2}{|\xi|} \int_0^\infty \sin(2\pi r |\xi|) e^{-C/r} e^{k r} dr
$$
Next we write $\sin(2\pi r |\xi|) = {\rm Im}  \left( e^{2\pi i r |\xi|} \right)$ to get
$$\widehat V_P^k (\xi) = {\rm Im}  \left(  \dfrac{2}{|\xi|} \int_0^\infty e^{2\pi i r |\xi|} e^{-C/r} e^{k r} dr \right)$$
To calculate the integral above, we appeal to \cite[Equation (2.11)]{SpecialFunctions}, which says
\begin{align} \label{specialfunctionidentity1}
2 \left( \dfrac{a}{b} \right)^{1/2} K_1 (2\sqrt{ab}) = \int_0^\infty \exp \left( -a/ r - br \right) dr
\end{align}
where $K_1 (z)$ denotes the modified Bessel function of the first kind, given by \cite{AS64}
$$K_1(z) = \lim_{\alpha \to 1} \frac{\pi}{2} \dfrac{I_{-\alpha}(z)-I_{\alpha}(z)}{\sin(\alpha \pi)}$$
where
$$I_{\nu}(z) = \sum_{k=0}^\infty \dfrac{1}{\Gamma \left(\nu + 1 + k\right)} \left( \dfrac{z}{2} \right)^{\nu+2k}$$ 
So, using (\ref{specialfunctionidentity1}) with $a=C$ and $b=-2\pi i |\xi| - k$ we obtain
\begin{align} \label{FTofVPk}
\widehat V_P^k(\xi) = \dfrac{2}{|\xi|} {\rm Im}  \left( \dfrac{2 K_1 \left( \sqrt{-C(k+2\pi i |\xi|)}\right)}{\sqrt{ -\dfrac{k+2\pi i |\xi|}{C}}}\right)
\end{align}
By continuity we then have
$$\lim_{k \to 0} V_P^k (\xi) = \dfrac{2}{|\xi|} {\rm Im} \left(\dfrac{ 2K_1 \left( \sqrt{-2\pi i C |\xi|}\right)}{\sqrt{\dfrac{-2\pi i |\xi|}{C}}}\right)$$
Since $\lim_{k \to 0} \widehat V_P^k (\xi) = \widehat V_P(\xi)$ in the sense of distributions, we conclude that the Fourier transform of $V_P$ (as a tempered distribution) is
$$\widehat V_P (\xi) = \dfrac{2}{|\xi|} {\rm Im} \left(\dfrac{ 2K_1 \left( \sqrt{-2\pi i C |\xi|}\right)}{\sqrt{\dfrac{-2\pi i |\xi|}{C}}}\right)$$
Finally, we claim that
\begin{align} \label{limitclaim}
\lim_{C\to 0} \widehat V_P (\xi) = \dfrac{2}{|\xi|} \dfrac{i}{2\pi |\xi|}
\end{align}
Assuming the claim, using that $\widehat V_P$ converges to the Fourier transform of the Coulomb potential in the sense of distributions, we recover the expected Fourier transform of the Coulomb potential. So it remains to show (\ref{limitclaim}). From \cite[Equation 9.6.9]{AS64}, we have that $K_1 (w) \sim w^{-1}$ as $w\to 0$, which means that for $C \to 0$ we have
$$K_1 \left( \sqrt{-2\pi i C |\xi|} \right) \sim \dfrac{1}{\sqrt{-2\pi i C |\xi|}}$$
This in particular implies that
$$\widehat V_P(\xi) \to \dfrac{2}{|\xi|} {\rm Im} \dfrac{C_0 i}{|\xi|}$$
for some constant $C_0$. Then taking the imaginary part gives the desired result.

\section{Spectral Properties of the Hamiltonian with Potential $V_P$} \label{sec:spectral}
To characterize the spectrum of the Hamiltonian with potential $V_P$, it is useful to consider the dilatation transformation $r \mapsto e^{i\theta}r$ for $\theta > 0$. This map rotates the coordinates of the Hamiltonian into the complex $r$ plane.  This is useful because the spectra of such Hamiltonians is well understood. 

Following \cite{Simon73}, we define the following class of operators on $L^2 (\mathbb{R}^3)$. First, denote by $u(\theta)$ the one parameter family of dilatations on $L^2(\mathbb{R}^3)$ given by
$$(u(\theta)f)(|x|)\coloneqq e^{3\theta/2} f(e^\theta |x|), \qquad \theta \in \mathbb{R}$$
The factor $e^{3\theta/2}$ is introduced to make the operator unitary. Notice that the kinetic energy $-\Delta$ transforms very nicely under $u$:
$$u(\theta)\left( - \Delta \right) u(\theta)^{-1} = e^{-2\theta} (-\Delta)$$
which implies that $u(\theta) (-\Delta) u(\theta)^{-1}$ has analytic continuation to complex $\theta$. The idea behind dilatation analytic potentials is to allow this result to remain true if a potential is added to the kinetic energy operator. 

Further, let $\mathscr{H}$ denote the Hilbert space $L^2 (\mathbb{R}^6, d^6x)$. 

\begin{definition}
Let $\alpha \in \mathbb{R}_+$. We say an unbounded operator $V$, defined on $L^2(\mathbb{R}^3)$, belongs to the class $C_\alpha$, if:
\begin{enumerate}
\item $\mathscr{D}(V) = \mathscr{D}(-\Delta)$ on $L^2(\mathbb{R}^3)$, and $V$ is symmetric; 
\item The induced operator $V: H^2 \coloneqq \mathscr{D}(-\Delta) \to \mathscr{H}$ is bounded and compact; 
\item The operators $V(\theta) : H^2 \to \mathscr{H}$ defined by $V(\theta) = u(\theta) V u(\theta)^{-1}$ for $\theta \in \mathbb{R}$ have an analytic continuation to the strip $\{ \theta: | \text{Im}(\theta) | < \alpha \}$. 
\end{enumerate}
where $H^k$ denotes the standard Sobolev space of order $k$, for $k \in \mathbb{Z}$, on $\mathbb{R}^3$. 
\end{definition}
The point of introducing this class of potentials is that the spectrum of such operators is explicitly known \cite{Balslev-Combes1}. We state the result of Balslev and Combes for a general $N$ electron system. To this end let $U(\theta)$ be the group of dilatations on $\mathbb{R}^{3(N-1)}$ given by
$$(U(\theta)f)(x) \coloneqq e^{3(N-1)\theta/2} f(e^{\theta} x)$$
For the $N$ electron Hamiltonian $H$ define $H(\theta) \coloneqq U(\theta) H U(\theta)^{-1}$. Write the potential $V = \sum_{j<k} V_{jk}(|x_j-x_k|)$. Denote by $D =\{ D_1, ..., D_k \}$ a \emph{decomposition} of $\{0,1,...,N-1\}$ into $k\geq 2$ groups. That is, $D_i\cap D_j = \emptyset$ if $i\neq j$ and
$$\bigcup_{i=1}^k D_i = \{0, 1,..., N-1\}$$
Let $H_i$ denote the Hamiltonian for the group $D_i$, i.e $H_i = -\Delta_i + V_i$, where $V_i$ is the set of interactions between electrons in group $D_i$. A bound state energy $E_1+...+E_k$ with $E_i$ eigen-energy of $H_i$ will be called a \emph{k-body threshold}. The family of all of these will be denoted $\Sigma$; a similar analysis can be done with $H(\theta)$. Denote the family of thresholds of $H(\theta)$ by $\Sigma(\theta)$. The result of Balslev and Combes then says:

\begin{theorem}[\cite{Balslev-Combes1}]
Assuming each $V_{jk} \in C_\alpha$, the spectrum of $H(\theta)$ is given explicitly as:
\begin{enumerate}
\item $\{z +e^{-2\theta} x: x \in \mathbb{R}_+, \; \forall \; z \in \Sigma(\theta) \}$; 
\item A set of isolated points $\sigma_i$ of the spectrum, which are eigenvalues with finite multiplicity. 
\end{enumerate}
Additionally, the real eigenvalues and thresholds of $H(\theta)$ are exactly those of $H$. All non-real eigenvalues and thresholds of $H(\theta)$ lie in the sector
$$\{ z: 0> \arg (z- \Sigma_{\text{min}}) < -2 \; \text{Im}(\theta) \}$$
where $\Sigma_{\text{min}} = \inf \{y: y \in \Sigma \cap \mathbb{R} \}$. 

Finally, the complex eigenvalues and thresholds of $H(\theta)$ that are isolated from other parts of the essential spectrum of $H(\theta)$ are in the spectrum of $H(\theta')$ if the imaginary part of $\theta'$ is close enough to the imaginary part of $\theta$. 

\end{theorem}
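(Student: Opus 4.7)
The plan is to treat $H(\theta)$ as an analytic family of type (A) on the strip $|\text{Im}\,\theta|<\alpha$, characterize its essential spectrum via a dilated analogue of the HVZ theorem, and then invoke Kato's analytic perturbation theory to control the discrete spectrum. First, one checks that $H(\theta) = e^{-2\theta}(-\Delta) + V(\theta)$ is closed on the common domain $\mathscr{D}(-\Delta)$, that $\theta\mapsto H(\theta)$ is analytic on the strip, and that for real $\theta$ the unitary equivalence $H(\theta) = U(\theta)HU(\theta)^{-1}$ gives $\sigma(H(\theta)) = \sigma(H)$. The membership $V_{jk} \in C_\alpha$ supplies both the analyticity on the strip and the relative $(-\Delta)$-compactness needed below.

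Next, one computes the essential spectrum by cluster decomposition. For each partition $D = \{D_1,\dots,D_k\}$, introduce the cluster Hamiltonian $H_D(\theta) = \sum_{i=1}^k H_i(\theta) + e^{-2\theta} T^{D}$, where $H_i(\theta)$ is the dilated intra-cluster Hamiltonian and $T^D$ the inter-cluster kinetic energy. The key lemma is that the intercluster interaction $I^D(\theta) = H(\theta)-H_D(\theta)$ is $H_D(\theta)$-compact, which follows from transporting the compactness of each $V_{jk}(\theta)\colon H^2\to\mathscr{H}$ through a geometric localization in the inter-cluster variables. Weyl's theorem on stability of the essential spectrum under relatively compact perturbations, combined with separation of variables inside $H_D(\theta)$, then identifies $\sigma_{\mathrm{ess}}(H(\theta))$ with $\bigcup_D \{z + e^{-2\theta}x : x\geq 0,\; z\in \sigma_{\mathrm{pp}}(\sum_i H_i(\theta))\}$, proving item (1); the complementary spectrum consists of isolated points of finite multiplicity by standard Fredholm theory, giving item (2).

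Third, one handles the fine structure of the eigenvalues. Since $H(\theta)$ is analytic in $\theta$, eigenvalue branches are analytic; on the real axis they are locally constant by unitarity, so by the identity theorem they are constant on their components of the strip. This forces real eigenvalues of $H(\theta)$ to coincide with those of $H$, and thresholds inherit the same property by induction on the cluster size. To confine non-real eigenvalues to the stated sector, I would examine the numerical range: for an eigenvector $\psi$,
$$\text{Im}\langle \psi, H(\theta)\psi\rangle = -\sin(2\,\text{Im}\,\theta)\, e^{-2\,\text{Re}\,\theta}\,\langle\psi,-\Delta\psi\rangle + \text{Im}\langle\psi, V(\theta)\psi\rangle,$$
and the relative compactness of $V(\theta)$ shows the kinetic term dominates, forcing $\arg(E-\Sigma_{\min})$ of any non-real eigenvalue $E$ to lie between $-2\,\text{Im}\,\theta$ and $0$. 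Finally, stability of isolated eigenvalues under small changes of $\theta'$ is a direct consequence of Kato's analytic perturbation theorem: outside the essential spectrum the resolvent $(H(\theta)-z)^{-1}$ depends analytically on $\theta$, so discrete eigenvalues deform analytically and persist under small variation of $\text{Im}\,\theta'$.

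The main obstacle is the relative $H_D(\theta)$-compactness of $I^D(\theta)$ uniformly on compact subsets of the strip. Establishing this requires a careful intercluster localization argument showing that dilated two-body interactions remain compact on the full Hilbert space when only part of the coordinates are rotated, and it is precisely the technical core of the Balslev--Combes method. Once this compactness is in hand, every other step reduces to standard spectral and perturbation theory.
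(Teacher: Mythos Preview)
The paper does not give a proof of this theorem. It is stated as a citation of the Balslev--Combes result \cite{Balslev-Combes1} and used as a black box; immediately after the statement the paper moves on to quoting the Babbitt--Balslev characterization and verifying that $V_P$ satisfies its hypotheses. There is therefore no ``paper's own proof'' to compare your proposal against.

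That said, your sketch is a reasonable outline of the actual Balslev--Combes argument: analytic family of type~(A), a dilated HVZ theorem via cluster decomposition and relative compactness of intercluster interactions, and Kato's analytic perturbation theory for the discrete eigenvalues and their $\theta$-independence on the real axis. One point to tighten is the sector containment step: the claim that ``relative compactness of $V(\theta)$ shows the kinetic term dominates'' in the numerical range is not by itself enough to pin down the argument of $E-\Sigma_{\min}$. In the original treatment the sector bound comes instead from the structure of $\sigma_{\mathrm{ess}}(H(\theta))$ already established in item~(1) together with the analyticity and $\theta$-independence of isolated eigenvalues, propagated inductively through the thresholds; a direct numerical-range estimate would require more than compactness alone.
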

In particular, the class $C_\alpha$ is large enough to contain the classical Coulomb potential $1/r$ \emph{and} the Yukawa potential. In fact, it was shown in \cite{Simon73} that these potentials actually belong to form analogs of $C_\alpha$, denoted $\mathscr{F}_\alpha$. In particular, $r^{-1} \in \mathscr{F}_{\infty}$ and $V_Y \in \mathscr{F}_{\pi/2}$. However, by replacing perturbation theory of ``Type A" with perturbation theory of ``Type B" (in the sense of Kato \cite{kato2013perturbation}), all results for $C_\alpha$ hold for $\mathscr{F}_\alpha$. 

To show that $V_P$ are indeed dilatation analytic, we appeal to the following useful characterization of such potentials. 

\begin{theorem}[\cite{Babbitt-Balslev1975}]
Let $\alpha \in \mathbb{R}_+$ and denote $S_\alpha = \left\{ re^{i \phi}: 0 < r < \infty, \; -\alpha < \phi < \alpha\right\}$. If $v(\xi)$ is an analytic function on $S_\alpha$ satisfying
\begin{enumerate}
\item $\sup\limits_{-\alpha + \epsilon \leq \phi \leq \alpha -\epsilon} \int_0^1 |v(re^{i \phi})|^2 r^2 dr < \infty$\\
\item $\sup\limits_{-\alpha + \epsilon \leq \phi \leq \alpha -\epsilon} \sup\limits_{1 \leq r \leq \infty} |v(re^{i \phi})| < \infty$
\end{enumerate}
for every $\epsilon > 0$, then the multiplication operator $V$ corresponding to the function $v$ is dilatation analytic in $S_\alpha$. 
\end{theorem}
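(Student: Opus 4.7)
The plan is to verify, in order, that the candidate family $V(\theta) = u(\theta) V u(\theta)^{-1}$ is defined, bounded, and analytic as an operator-valued function on the strip $\{|\mathrm{Im}\,\theta|<\alpha\}$.

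First I would make the pointwise definition of $V(\theta)$ explicit. A short computation shows that for real $\theta$, conjugation by $u(\theta)$ sends multiplication by the radial function $v(|x|)$ to multiplication by $v(e^{\theta}|x|)$. For complex $\theta$ with $|\mathrm{Im}\,\theta|<\alpha$, the product $e^{\theta}|x|$ lies in the sector $S_\alpha$, so by analyticity of $v$ on $S_\alpha$ the function $x\mapsto v(e^{\theta}|x|)$ is pointwise well-defined and depends holomorphically on $\theta$ for each fixed $x$. This gives a candidate operator family $V(\theta)$ on the whole strip.

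Next I would show that each $V(\theta)$ is bounded from $H^{2}(\mathbb{R}^{3})$ to $L^{2}(\mathbb{R}^{3})$, with a bound uniform for $\theta$ in compact subsets of the strip. The key step is a splitting $v(e^{\theta}|x|)=v_{\mathrm{in}}(x)+v_{\mathrm{out}}(x)$, where $v_{\mathrm{in}}$ is supported on $\{|x|\le e^{-\mathrm{Re}\,\theta}\}$ and $v_{\mathrm{out}}$ on its complement. Hypothesis (1), with the weight $r^{2}\,dr$ being exactly the radial volume element in $\mathbb{R}^{3}$, gives $v_{\mathrm{in}}\in L^{2}(\mathbb{R}^{3})$; hypothesis (2) gives $v_{\mathrm{out}}\in L^{\infty}(\mathbb{R}^{3})$. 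Combined with the Sobolev embedding $H^{2}(\mathbb{R}^{3})\hookrightarrow L^{\infty}(\mathbb{R}^{3})$ in three dimensions, H\"older's inequality shows multiplication by $v_{\mathrm{in}}$ is bounded from $H^{2}$ to $L^{2}$, while multiplication by $v_{\mathrm{out}}$ is clearly bounded on $L^{2}$. The uniform suprema over sub-sectors built into hypotheses (1)-(2) are precisely what forces these bounds to be uniform when $\theta$ varies over compact subsets of the strip (one only needs $\theta$ to stay in some $|\mathrm{Im}\,\theta|\le\alpha-\epsilon$).

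Finally I would promote the pointwise analyticity of $\theta\mapsto v(e^{\theta}|x|)$ to operator-valued analyticity of $\theta\mapsto V(\theta)$. Fix $\psi\in H^{2}$ and $\phi\in L^{2}$ and consider
\[
F(\theta) \;=\; \langle \phi,\, V(\theta)\psi\rangle \;=\; \int_{\mathbb{R}^{3}} \overline{\phi(x)}\, v(e^{\theta}|x|)\, \psi(x)\, dx.
\]
The integrand is holomorphic in $\theta$ pointwise, and the bounds of the previous paragraph supply an integrable dominating function on each compact subset of the strip. Morera combined with Fubini then yields analyticity of $F$, i.e.\ weak analyticity of the family $V(\theta)$. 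Together with local uniform operator-norm boundedness, this weak analyticity upgrades to norm analyticity of $\theta\mapsto V(\theta)$ as a map into the bounded operators from $H^{2}$ to $L^{2}$, which is exactly the dilatation analyticity conclusion.

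The main obstacle I expect is the \emph{simultaneous} control of boundedness and analyticity: without the uniformity in the angle $\phi$ in hypotheses (1)-(2), one could still deduce pointwise analyticity of the multiplier, but not operator-norm analyticity of the family $V(\theta)$. A secondary subtlety is that the splitting into inner and outer pieces depends on $\theta$, so some bookkeeping is required to confirm the $H^{2}\to L^{2}$ bounds are genuinely uniform on compact subsets of the strip; this can be handled by working inside a slightly shrunken sub-sector $|\mathrm{Im}\,\theta|\le\alpha-\epsilon$ and using a smooth radial cut-off in place of the sharp one.
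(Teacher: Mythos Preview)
The paper does not prove this theorem; it is quoted as a result of Babbitt--Balslev and used as a black box to verify that $V_P$ is dilatation analytic. There is therefore no argument in the paper to compare your proposal against.

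As a standalone sketch, your analyticity argument is the standard one and is sound: identify $V(\theta)$ with multiplication by $v(e^{\theta}|x|)$, use hypotheses (1)--(2) together with the Sobolev embedding $H^{2}(\mathbb{R}^{3})\hookrightarrow L^{\infty}$ to get $H^{2}\to L^{2}$ boundedness that is locally uniform in $\theta$, and then pass from pointwise to weak to norm analyticity via Morera/Fubini and the local uniform bound. One gap relative to the paper's definition of $C_{\alpha}$ is that you establish only boundedness of $V:H^{2}\to L^{2}$, whereas item (2) of that definition also demands compactness. Your $L^{2}+L^{\infty}$ splitting does not give compactness by itself, since a generic $L^{\infty}$ multiplier is not compact from $H^{2}$ to $L^{2}$; hypothesis (2) of the theorem, as stated, only gives boundedness of $v$ for $r\ge 1$ and not decay. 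If the intended conclusion is full membership in $C_{\alpha}$, you would need to supplement your argument (e.g.\ by showing the $L^{\infty}$ tail can be made arbitrarily small outside large balls, or by appealing to a relative-compactness criterion); if ``dilatation analytic'' is meant only in the sense of analytic continuation of $\theta\mapsto V(\theta)$ as bounded operators $H^{2}\to L^{2}$, your outline is essentially complete.
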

From this it was shown in \cite{simon1978resonances} that for a radial potential it's actually enough to check the following:

\begin{enumerate}
\item[(I)] $V(r)$ has an analytic extension to the sector $|\arg (r)| < \alpha$
\item[(II)] $$\lim_{\substack{r \to \infty \\ |\arg(r)| < \beta}} |V(r)| =0 \qquad \forall \; \beta < \alpha$$ 
\item[(III)] $$\lim_{\substack{r\to 0 \\ |\arg(r)| < \beta}} r^{2-\epsilon} |V(r)| = 0 \quad \text{ for some } \epsilon > 0 \; \text{ and all } \beta < \alpha$$
\end{enumerate}
Item (I) holds with $\alpha = \pi/2$ by a classical result of Whitney \cite{whitney1934analytic}, and the other two items follow easily by definition of $V_P(r)$ (one can take $\epsilon = 1/2$ for (III)).

\begin{remark}
One can do a much more in depth eigenvalue analysis of Hamiltonians with $V_P(x)$ than what we've done here. In particular, since the spectrum and eigenvalues for Coulombic Hamiltonians are well known, one can studying the limiting case (as the parameter $C\to 0$) of the eigenvalues of our Hamiltonians to Coulombic ones. This could be useful for example in doing resonance calculations for the Helium atom, for which one expects to observe some sort of ``anomalies" in Auger processes \cite{Simon73}. The width of such an anomaly, called a resonance, can be calculated in terms of the eigenvalues of the Hamiltonian. In particular, viewing $V_P$ as a perturbation of $H_0$, it would be interesting to see if an eigenvalue can be turned into a resonance; we have not yet pursued this direction. 
\end{remark}


\section{Conclusion and Outlook}
We have introduced a new class of softened Coulomb potentials $V_P(x)$ and shown existence of solutions to a time dependent Schr\"odinger equation for a general $N$ particle, $M$ nuclei problem with such potentials. We also found explicitly the Fourier transform of these potentials, and used this to give an alternative rigorous derivation of the Fourier transform of the Coulomb potential. Finally we showed that the potentials are dilatation analytic, and so the spectrum of the Hamiltonian can be understood completely. The next step is to use the potentials to aid in the development of a more general Runge-Gross Theorem. In particular, by showing that a Runge-Gross type theorem holds for $V_P$, one can study the limit (in an appropriate sense) as the parameter $C \to 0$ to recover the Coulomb potential in this context. This will be done in future work. A more thorough eigenvalues analysis of Hamiltonians with perturbations $V_P$ will also be done, as discussed in the previous section.

\bibliographystyle{abbrv}
\bibliography{TDDFTbib}

\end{document}